\newcommand{\doublewidetilde}[1]{{%
		\mathpalette\double@widetilde{#1}%
	}}
\def\BState{\State\hskip-\ALG@thistlm}
\newtheorem{theorem}{Theorem}
\newtheorem{lemma}{Lemma}
\newcounter{eqnback}
\newcounter{eqncnt}
\begin{document}

\title{Joint Pilot Sequence Design and Power Control for Max-Min Fairness in Uplink Massive MIMO}

\author{\IEEEauthorblockN{Trinh Van Chien, Emil Bj\"{o}rnson, and Erik G. Larsson}
	\IEEEauthorblockA{Department of Electrical
		Engineering (ISY), Link\"{o}ping University, SE-581 83 Link\"{o}ping, Sweden\\
		\{trinh.van.chien, emil.bjornson, erik.g.larsson\}@liu.se}
	\thanks{This paper was supported by the European Union's Horizon 2020 research and innovation programme under grant agreement No 641985 (5Gwireless). It was also supported by ELLIIT and CENIIT.}
}

\maketitle

\begin{abstract}
This paper optimizes the pilot assignment and pilot transmit powers to mitigate pilot contamination in Massive MIMO (multiple-input multiple-output) systems. While prior works have treated pilot assignment as a combinatorial problem, we achieve a more tractable problem formulation by directly optimizing the pilot sequences. To this end, we compute a lower bound on the uplink (UL) spectral efficiency (SE), for Rayleigh fading channels with maximum ratio (MR) detection and arbitrary pilot sequences. We optimize the max-min SE with respect to the pilot sequences and pilot powers, under power budget constraints. This becomes an NP-hard signomial problem, but we propose an efficient algorithm to obtain a local optimum with polynomial complexity. Numerical results manifest the near optimality of the proposed algorithm and show significant gains over existing suboptimal algorithms.
\end{abstract}

\IEEEpeerreviewmaketitle

\section{Introduction}
Massive MIMO has recently emerged as a key technology for 5G communications \cite{Marzetta2010a,Larsson2014a,Andrews2014b}, since it can bring significant improvements to the spectral and energy efficiency of cellular networks \cite{Ngo2013a}. By equipping the base stations (BSs) with a large number of antennas, the mutual interference, thermal noise, and small-scale fading can be almost eliminated by virtue of the channel hardening and favorable propagation phenomena \cite{Bjornson2016b,Ngo2014a}. The BS utilizes estimated channel state information (CSI) to achieve these gains, which is generally acquired using UL pilot signals.

To achieve the maximum CSI quality, mutually orthogonal pilot sequences are desirable, but this is impractical since the pilot overhead would be proportional to the total number of users in the entire system. The size of the channel coherence block limits the number of orthogonal pilots, and at most half the block should be used for pilots \cite{Bjornson2016a}.
The consequence is that the pilots need to be reused across cells, which creates so-called pilot contamination \cite{Marzetta2010a,Jose2011b}, where users with the same pilot cause large interference to each other. A pilot reuse factor can be applied to not use the same pilots in neighboring cells, which reduces the pilot contamination at the cost of extra pilot overhead \cite{Yang2015a,Bjornson2016a}. However, some pilot contamination still remains and its impact strongly depends on which users that have the same pilot. The pilot assignment is a combinatorial problem, thus finding the optimal assignment is generally NP-hard \cite{Jin2015a}. This has motivated the design of suboptimal greedy pilot assignment algorithms, which utilize statistical information such as the large-scale fading \cite{Jin2015a, Xu2015a}.
Another way to improve the channel estimation quality is to optimize the transmit powers used for the pilots \cite{ Victor2015b, Guo2015a}.

In this paper, we consider joint optimization of the pilot assignment and pilot powers in Massive MIMO, in contrast to previous works that focused on only one of these components. In particular, we obtain ergodic achievable SE expressions for MR detection and arbitrary pilot sequences. The pilot sequences are treated as optimization variables and we formulate a max-min SE optimization problem, which becomes a signomial program. Due to NP-hardness of signomial programs, we propose a suboptimal approach that finds a local optimum in polynomial time. Numerical results show that this solution is close to the global optimum and can provide great performance improvements over prior works.

\textit{Notations}: The lower bold letters are used for vectors and the upper bold are for matrices. $(\cdot)^T$ and $(\cdot)^H$ stand for the transpose and conjugate transpose, respectively. $\mathbf{I}_{n}$ is the $n \times n$ identity matrix. $\mathbb{E} \{\cdot \}$ denotes  expectation, $\| \cdot \| $ is the Euclidean norm, and $\mathcal{CN} (\cdot, \cdot)$ is the circularly symmetric complex Gaussian distribution.

\section{Pilot Designs for Massive MIMO Systems} \label{Section: System Model}

We consider the UL of a multi-cell Massive MIMO system with $L$ cells.  Each cell consists of a BS equipped with $M$ antennas which serves $K$ single-antenna users. All tuples of cell and user indices  belong to the set 
\begin{equation}
\mathcal{S} = \left\{ (i,t): \; i \in \{ 1, \ldots, L\}, \; t \in \{ 1, \ldots, K \} \right \}.
\end{equation}
The radio channels vary over time and frequency. We divide the time-frequency plane into coherence blocks, each containing $\tau_c$ samples, such that the channel between each user and each BS is static and frequency flat. In each coherence block the users transmit pilot sequences of length $\tau_p$ symbols, while the remaining $\tau_c-\tau_p$ symbols are used for data transmission. In this paper, we focus on the UL, so the fraction $(1 - \tau_p / \tau_c)$ of the coherence block is dedicated to  UL data. We assume $\tau_p \geq 1$ to keep the estimation process feasible and stress that the practical case $\tau_p < KL$ is of key importance since it gives rise to pilot contamination.

\subsection{Proposed Pilot Design} \label{subsection: ProposedPilot}

We aim at optimizing the pilot sequence collection $ \{ \pmb{\psi}_{1,1}, \ldots, \pmb{\psi}_{L,K}\}$, where $\pmb{\psi}_{l,k} \in \mathbb{C}^{\tau_p}$ is the pilot sequence assigned to user~$k$ in cell~$l$. To this end, let us define the $\tau_p$ mutually orthonormal basis vectors $\{\pmb{\phi}_1, \ldots, \pmb{\phi}_{\tau_p} \}$, where $\pmb{\phi}_b \in \mathbb{C}^{\tau_p},$ $\forall b= 1, \ldots, \tau_p$. The corresponding basis matrix is
\begin{equation}
\pmb{\Phi} = [\pmb{\phi}_1, \ldots, \pmb{\phi}_{\tau_p}],
\end{equation}
and it satisfies $\pmb{\Phi}^H \pmb{\Phi} = \mathbf{I}_{\tau_p}$. We assume that each pilot sequence is spanned by these basis vectors. In particular, the pilot sequence of user~$k$ in cell~$l$ is
\begin{equation} \label{eq: ProposedPilotSequence}
\pmb{\psi}_ {l,k}=  \sum_{b =1}^{\tau_p} \sqrt{ \hat{p}_{l,k}^b } \pmb{ \phi}_{b}, \quad \forall l,k,
\end{equation}
where $\hat{p}_{l,k}^b \geq 0$ is the power assigned to the $b$th basis vector.
We stress that the pilot construction in \eqref{eq: ProposedPilotSequence} can create arbitrarily many different orthogonal or non-orthogonal pilots, with arbitrary  total pilot power $\| \pmb{\psi}_{l,k} \|^2 = \sum_{b=1}^{\tau_p} \hat{p}_{l,k}^b$.
We assume that the average pilot power of user~$k$ in cell~$l$ satisfies the power constraint
\begin{equation} \label{eq:Max-Power}
\frac{1}{\tau_p} \sum_{b=1}^{\tau_p} \hat{p}_{l,k}^b \leq P_{\textrm{max},l,k}, \quad \forall  l,k,
\end{equation}
where $P_{\textrm{max},l,k}$ is the maximum pilot power for user~$k$ in cell~$l$. The inner product of two pilot sequences $\pmb{\psi}_ {l,k}$ and $\pmb{\psi}_ {i,t}$ is
\begin{equation} \label{eq: Orthogonal_Property}
\begin{split}
\pmb{\psi}_ {l,k}^H \pmb{\psi}_ {i,t} = \sum_{b =1}^{\tau_p} \sqrt{ \hat{p}_{l,k}^b \hat{p}_{i,t}^b  }.
\end{split}
\end{equation}
 These pilot sequences are orthogonal if every term in the sum is zero, which only happens when the two users allocate their pilot power to disjoint subsets of the basis vectors. Otherwise, the sequences are non-orthogonal and the two users will cause pilot contamination to each other. If the square roots of the powers allocated to the $K$~users in cell~$l$ are gathered in matrix form as
\begin{equation} \label{eq: ProposedPilotPower}
\pmb{P}_l = \begin{bmatrix}
\sqrt{\hat{p}_{l,1}^1}      & \sqrt{\hat{p}_{l,2}^1}     &  \cdots     &  \sqrt{\hat{p}_{l,K}^1} \\
\sqrt{\hat{p}_{l,1}^2}       &  \sqrt{\hat{p}_{l,2}^2}     & \cdots    & \sqrt{\hat{p}_{l,K}^2} \\
\vdots & \vdots &  \ddots      &  \vdots  \\
\sqrt{\hat{p}_{l,1}^{\tau_p}}      &  \sqrt{\hat{p}_{l,2}^{\tau_p}}     &  \cdots   &  \sqrt{\hat{p}_{l,K}^{\tau_p}}
\end{bmatrix} \in \mathbb{R}_{+}^{\tau_p \times K},
\end{equation}
 then the users in cell~$l$ utilize a pilot matrix defined as
\begin{equation} \label{eq: PilotStructure1}
\mathbf{\Psi}_l = [ \pmb{\psi}_{l,1}, \ldots,  \pmb{\psi}_{l,K} ] = \pmb{\Phi} \pmb{P}_l.
\end{equation}
We now describe the difference between this general pilot structure and the prior works, for example \cite{Xu2015a,Zhu2015a, Victor2015b, Guo2015a}. 

\subsection{Other Pilot Designs}
The works \cite{Xu2015a,Zhu2015a} considered the assignment of $\tau_p$ orthogonal pilot sequences using equal pilot power $\hat{p}  \leq \tau_p P_{\max,l,k}$ for every user. Using our notation, the pilot matrix in cell $l$ is
\begin{equation} \label{eq: fixedPilotPower}
\widehat{\pmb{\Psi}}_l = [\hat{\pmb{\psi}}_{l,1}, \ldots, \hat{\pmb{\psi}}_{l,K} ] = \sqrt{\hat{p}}  \pmb{\Phi} \pmb{\Pi}_l,
\end{equation}
where $\pmb{\Pi}_l \in \mathbb{R}_{+}^{\tau_p \times K}$ is a permutation matrix. This matrix is optimized in \cite{Xu2015a,Zhu2015a} to assign the pilots to users to minimize a metric of mutual interference.
This pilot design is a special case of our proposed design, since \eqref{eq: fixedPilotPower} assumes the use of orthogonal pilot sequences and equal power allocation. These assumptions might be suboptimal in systems with large pathloss differences. The selection of the optimal permutation matrix for cell $l$ is a complicated combinatorial problem, so \cite{Xu2015a,Zhu2015a} only study the special case of $\tau_p = K$.

The previous work \cite{Victor2015b} optimized the pilot powers to maximize functions of the SE, but the paper only considers a single cell with orthogonal pilot sequences, i.e., $\tau_p \geq KL$ with $L=1$. Besides, the authors of \cite{Guo2015a} optimized the pilot powers to maximize the energy efficiency of a multi-cell system. That paper  assumed $\tau_p = K$ and a fixed pilot assignment. If $p_{l,k}$ is the pilot power of user~$k$ in cell~$l$, then the square root of the power matrix allocated to the $K$~users in cell~$l$ is a diagonal matrix defined as
 \begin{equation}
\widetilde{\pmb{P}}_l = \mathrm{diag}\left( \sqrt{\hat{p}_{l,1}}, \ldots,  \sqrt{\hat{p}_{l,K}} \right).
 \end{equation}
The pilot used for the users in cell~$l$ is then formulated as
\begin{equation} \label{eq: PilotStructure2}
\widetilde{\mathbf{\Psi}_l} = \pmb{\Phi} \widetilde{\pmb{P}}_l.
\end{equation}
Similar to \eqref{eq:Max-Power}, the pilot power at user~$k$ in cell~$l$ is limited as
\begin{equation}
 0 \leq \hat{p}_{l,k} \leq \tau_p P_{\max,l,k}.
\end{equation}
This is also a special case of our proposed pilot design, since \eqref{eq: PilotStructure2} assumes orthogonal pilots and fixed pilot assignment.

If we combine the pilot structure in \eqref{eq: PilotStructure2} with the permutation matrix approach from \eqref{eq: fixedPilotPower}, the pilot sequences of the $K$~users in cell~$l$ become
\begin{equation} \label{eq: PilotStructure3}
\widetilde{\widetilde{\mathbf{\Psi}}}_l =  [\tilde{\tilde{\pmb{\psi}}}_{l,1}, \ldots, \tilde{\tilde{\pmb{\psi}}}_{l,K} ] = \pmb{\Phi} \pmb{\Pi}_l \widetilde{\pmb{P}}_l .
\end{equation}
In principle, we can now consider all possible permutation matrices and optimize the pilot power for each one, based on the algorithms in previous work. This approach is computationally heavy, but will serve as a benchmark in Section~\ref{Section: Experimental Result}.

\section{UL Massive MIMO With Arbitrary Pilots} \label{Section: ULTransmission}
This section provides ergodic SE expressions with the new pilot sequences in \eqref{eq: PilotStructure1}, which will be used for optimized pilot design and power control in Section~\ref{Section: OptProblem}.
\setcounter{eqnback}{\value{equation}} \setcounter{equation}{22}
 \begin{figure*}[t]
 	\begin{equation} \label{eq: SINR_MRC1}
 	\mathrm{SINR}_{l,k}= \frac{ M (\beta_{l,k}^l)^2 p_{l,k} \left( \sum\limits_{b=1}^{\tau_p} \hat{p}_{l,k}^b \right)^2 }{  \left( \sum\limits_{(i,t) \in \mathcal{S} } \beta_{i,t}^l \left( \sum\limits_{b=1}^{\tau_p} \sqrt{ \hat{p}_{i,t}^{b}  \hat{p}_{l,k}^{b}} \right)^2  +  \sigma^2 \sum\limits_{b=1}^{\tau_p} \hat{p}_{l,k}^b \right) \left( \sum\limits_{(i,t) \in \mathcal{S} } p_{i,t}  \beta_{i,t}^l + \sigma^2 \right)   +  M \sum\limits_{(i,t) \in \mathcal{S} \setminus (l,k)} p_{i,t}  (\beta_{i,t}^l )^2  \left(\sum\limits_{b=1}^{\tau_p} \sqrt{ \hat{p}_{i,t}^{b}  \hat{p}_{l,k}^{b}} \right)^2  }
 	\end{equation}
 	\vspace*{-0.25cm}
 	\hrulefill
 	\vspace*{-0.25cm}
 \end{figure*}
 \setcounter{eqncnt}{\value{equation}}
 \setcounter{equation}{\value{eqnback}}
\subsection{Channel Estimation}
During the UL pilot transmission, the received signal $\mathbf{Y}_l \in \mathbb{C}^{M \times \tau_p}$ at the BS of cell~$l$ is 
\begin{equation} \label{eq: Received_Pilot}
\mathbf{Y}_l  = \sum_{(i,t) \in \mathcal{S}} \mathbf{h}_{i,t}^{l} \pmb{\psi}_{i,t}^{H} +  \mathbf{N}_l,
\end{equation}
where $\mathbf{h}_{i,t}^l \in \mathbb{C}^M$ denotes the channel between user~$t$ in cell~$i$ and BS~$l$. $\mathbf{N}_l \in \mathbb{C}^{M \times \tau_p}$ is the additive noise with independent elements distributed as $\mathcal{CN}(0, \sigma^2)$. Correlating $\mathbf{Y}_l$ in \eqref{eq: Received_Pilot} with pilot sequence $\pmb{\psi}_{l,k}$ of user~$k$ in cell~$l$, we obtain
\begin{equation}
\mathbf{y}_{l,k} = \mathbf{Y}_{l,k} \pmb{\psi}_{l,k} = \sum_{(i,t) \in \mathcal{S}} \mathbf{h}_{i,t}^{l} \pmb{\psi}_{i,t}^H \pmb{\psi}_{l,k} + \mathbf{N}_{l} \pmb{\psi}_{l,k}.
\end{equation}
We consider independent Rayleigh fading where the channel between user~$t$ in cell~$i$ and BS~$l$ is distributed as
\begin{equation}
\mathbf{h}_{i,t}^l \sim \mathcal{CN} \left(  \mathbf{0} ,  \beta_{i,t}^l \mathbf{I}_M \right),
\end{equation}
where the variance $\beta_{i,t}^l$ determines the large-scale fading, including geometric attenuation and shadowing. By using minimum mean squared error (MMSE) estimation, the distributions of the channel estimate and estimation error are as follows.

\begin{lemma} \label{lemma: Distribution}
If the system uses the pilot structure in \eqref{eq: PilotStructure1}, the channel estimate is distributed as
\begin{equation}
\hat{ \mathbf{h} }_{l,k}^l \sim \mathcal{CN} \left( \mathbf{0}, \gamma_{l,k}^l   \mathbf{I}_M  \right),
\end{equation}
where
\begin{equation*}
\gamma_{l,k}^l = \frac{ (\beta_{l,k}^l)^2 \left(\sum\limits_{b=1}^{\tau_p} \hat{p}_{l,k}^b \right)^2 }{ \sum\limits_{(i,t) \in \mathcal{S} } \beta_{i,t}^l \left( \sum\limits_{b=1}^{\tau_p} \sqrt{\hat{p}_{i,t}^b  \hat{p}_{l,k}^b} \right)^2  +  \sigma^2 \sum\limits_{b=1}^{\tau_p} \hat{p}_{l,k}^b }.
\end{equation*}
The estimation error $\mathbf{e}_{l,k}^l  = \mathbf{h}_{l,k}^l - \hat{\mathbf{h}}_{l,k}^l$ is independent of the channel estimate and distributed as
\begin{equation}
\mathbf{e}_{l,k}^l
\sim \mathcal{CN} \left( \mathbf{0} , \left( \beta_{l,k}^l - \gamma_{l,k}^l \right)\mathbf{I}_M  \right).
\end{equation}
\end{lemma}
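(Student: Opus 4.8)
The plan is to treat this as a textbook MMSE estimation problem for jointly Gaussian vectors, so that the estimator is linear and both distributions follow from computing just two covariance matrices. First I would substitute the inner-product identity \eqref{eq: Orthogonal_Property} into the correlated observation $\mathbf{y}_{l,k}$. Writing $c_{i,t} \triangleq \pmb{\psi}_{i,t}^H \pmb{\psi}_{l,k} = \sum_{b=1}^{\tau_p} \sqrt{\hat{p}_{i,t}^b \hat{p}_{l,k}^b} \ge 0$ (a real scalar, since the allocated powers are nonnegative), the observation becomes
\[
\mathbf{y}_{l,k} = \sum_{(i,t) \in \mathcal{S}} c_{i,t}\, \mathbf{h}_{i,t}^l + \mathbf{N}_l \pmb{\psi}_{l,k}.
\]
Because the $\mathbf{h}_{i,t}^l$ are independent zero-mean complex Gaussian vectors and $\mathbf{N}_l$ is independent Gaussian noise, the pair $(\mathbf{h}_{l,k}^l, \mathbf{y}_{l,k})$ is jointly zero-mean Gaussian. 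Hence the MMSE estimate coincides with the LMMSE estimate and is a linear function of $\mathbf{y}_{l,k}$, which is what makes the closed forms available.

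Next I would evaluate the two required moments. Using independence of the channels and the noise, only the $(i,t)=(l,k)$ term survives in the cross-covariance, giving $\mathbb{E}\{ \mathbf{h}_{l,k}^l \mathbf{y}_{l,k}^H \} = c_{l,k}\beta_{l,k}^l \mathbf{I}_M$, where I use that $c_{l,k} = \| \pmb{\psi}_{l,k} \|^2 = \sum_{b=1}^{\tau_p} \hat{p}_{l,k}^b$ is real. For the observation covariance, the independent channels contribute $\sum_{(i,t)} c_{i,t}^2 \beta_{i,t}^l \mathbf{I}_M$ and the noise contributes $\sigma^2 \| \pmb{\psi}_{l,k} \|^2 \mathbf{I}_M = \sigma^2 c_{l,k} \mathbf{I}_M$, so $\mathbb{E}\{ \mathbf{y}_{l,k} \mathbf{y}_{l,k}^H \} = D\, \mathbf{I}_M$ with
\[
D = \sum_{(i,t) \in \mathcal{S}} \beta_{i,t}^l \Big( \sum_{b=1}^{\tau_p} \sqrt{\hat{p}_{i,t}^b \hat{p}_{l,k}^b} \Big)^2 + \sigma^2 \sum_{b=1}^{\tau_p} \hat{p}_{l,k}^b .
\]
The structural point worth emphasizing is that both moments are scalar multiples of $\mathbf{I}_M$, so the estimator stays isotropic even though the pilots are non-orthogonal; this is precisely what collapses the matrix expressions to the stated scalar variances.

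Finally, the MMSE estimate is $\hat{\mathbf{h}}_{l,k}^l = (c_{l,k}\beta_{l,k}^l / D)\, \mathbf{y}_{l,k}$, a zero-mean Gaussian vector with covariance $(c_{l,k}\beta_{l,k}^l/D)^2 \mathbb{E}\{ \mathbf{y}_{l,k} \mathbf{y}_{l,k}^H \} = ( (c_{l,k}\beta_{l,k}^l)^2 / D)\, \mathbf{I}_M$; substituting $c_{l,k} = \sum_{b} \hat{p}_{l,k}^b$ recovers $\gamma_{l,k}^l \mathbf{I}_M$ exactly. For the error I would invoke the orthogonality principle, which gives $\mathbb{E}\{ \mathbf{e}_{l,k}^l (\hat{\mathbf{h}}_{l,k}^l)^H \} = \mathbf{0}$; since $\mathbf{e}_{l,k}^l$ and $\hat{\mathbf{h}}_{l,k}^l$ are jointly Gaussian, uncorrelatedness upgrades to independence. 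Writing $\mathbf{h}_{l,k}^l = \hat{\mathbf{h}}_{l,k}^l + \mathbf{e}_{l,k}^l$ as a sum of independent zero-mean Gaussians, the covariances add, so $\mathrm{Cov}(\mathbf{e}_{l,k}^l) = (\beta_{l,k}^l - \gamma_{l,k}^l)\mathbf{I}_M$, establishing the claimed error distribution. I do not expect a genuine obstacle; the only steps requiring care are the bookkeeping of the noise covariance $\sigma^2 \| \pmb{\psi}_{l,k} \|^2$ and the verification that every covariance is proportional to $\mathbf{I}_M$.
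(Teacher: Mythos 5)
Your proposal is correct and follows exactly the route the paper intends: the paper's proof is a one-line appeal to ``standard MMSE estimation'' in the Kay reference, and your derivation is precisely that standard argument (joint Gaussianity of $(\mathbf{h}_{l,k}^l,\mathbf{y}_{l,k})$, the two covariance computations, the linear estimator, and the orthogonality principle) written out in full. The moment calculations, including the noise term $\sigma^2\|\pmb{\psi}_{l,k}\|^2\mathbf{I}_M$ and the identification $c_{l,k}=\sum_b \hat{p}_{l,k}^b$, all check out against the stated $\gamma_{l,k}^l$.
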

\begin{proof}
This result follows directly from standard MMSE estimation in \cite{Kay1993a}.
\end{proof}

Lemma~\ref{lemma: Distribution} provides the MMSE estimator for the general pilot structure in \eqref{eq: PilotStructure1}. The pilot powers as well as the inner products between pilot sequences appear explicitly in the expressions.

\subsection{UL Data Transmission}
In the UL data transmission, user~$t$ in cell~$i$ transmits the signal $x_{i,t} \sim \mathcal{CN}(0,1)$. The $M \times 1$ received signal vector at BS~$l$ is the superposition of the transmitted signals
\begin{equation}
\mathbf{y}_l = \sum_{(i,t) \in \mathcal{S} } \sqrt{p_{i,t}} \mathbf{h}_{i,t}^l x_{i,t} + \mathbf{n}_l,
\end{equation}
where $p_{i,t}$ is the transmit power corresponding to the signal $x_{i,t}$ and the additive noise is $\mathbf{n}_l \sim \mathcal{CN} ( \mathbf{0}, \sigma^2 \mathbf{I}_M)$. To detect the transmitted signal, BS~$l$ selects a detection vector $\mathbf{v}_{l,k} \in \mathbb{C}^M$ and applies it to the received signal as
\begin{equation} \label{eq: Signal-Detection}
\mathbf{v}_{l,k}^H \mathbf{y}_l = \sum_{(i,t) \in \mathcal{S} } \sqrt{p_{i,t}}  \mathbf{v}_{l,k}^H \mathbf{h}_{i,t}^l x_{i,t} +  \mathbf{v}_{l,k}^H \mathbf{n}_l .
\end{equation}
A general lower bound on the UL ergodic capacity of user~$k$ in cell~$l$ is computed in \cite{Bjornson2016a} as
\begin{equation} \label{eq:RateProposedPilot}
R_{l,k} = \left( 1 - \frac{\tau_p}{\tau_c} \right) \log_2 \left(1 + \textrm{SINR}_{l,k} \right),
\end{equation}
with $\textrm{SINR}_{l,k}$ given by
\begin{equation} \label{eq: SINR_k}
  \frac{ p_{l,k} | \mathbb{E} \{ \mathbf{v}_{l,k}^{H} \mathbf{h}_{l,k}^l  \} |^2 }{\sum\limits_{(i,t) \in \mathcal{S} } p_{i,t} \mathbb{E} \{ | \mathbf{v}_{l,k}^{H} \mathbf{h}_{i,t}^{l} |^2 \} - p_{l,k} | \mathbb{E} \{ \mathbf{v}_{l,k}^{H} \mathbf{h}_{l,k}^{l} \} |^2 + \sigma^2 \mathbb{E} \{ \| \mathbf{v}_{l,k} \|^2 \} }.
\end{equation}

As a contribution of this paper, we compute a closed-form expression for this lower bound in the case of MR detection with $\mathbf{v}_{l,k} = \hat{\mathbf{h}}_{l,k}^l$. This is a highly scalable detection method suitable for practical Massive MIMO systems.

\begin{lemma} \label{Lemma: Achievable_Rate}
If the system uses the pilot structure in \eqref{eq: PilotStructure1} and MR detection, the SE in \eqref{eq:RateProposedPilot} for user~$k$ in cell~$l$ becomes
	\begin{equation}
	R_{l,k} =  \left( 1 - \frac{\tau_p}{\tau_c} \right) \log_2 \left(1 + \mathrm{SINR}_{l,k} \right),
	\end{equation}
	where  $\mathrm{SINR}_{l,k}$ is shown in \eqref{eq: SINR_MRC1} at the top of this page.
\end{lemma}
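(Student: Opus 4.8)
The plan is to evaluate each of the four expectations appearing in the generic SINR expression \eqref{eq: SINR_k} under MR detection, $\mathbf{v}_{l,k} = \hat{\mathbf{h}}_{l,k}^l$, and then assemble them into \eqref{eq: SINR_MRC1}. First I would make the MMSE estimator explicit: correlating the received pilot signal \eqref{eq: Received_Pilot} shows that $\hat{\mathbf{h}}_{l,k}^l = a_{l,k}\,\mathbf{y}_{l,k}$ for a deterministic scalar $a_{l,k}$, where $\mathbf{y}_{l,k} = \sum_{(i,t)\in\mathcal{S}} c_{i,t}\,\mathbf{h}_{i,t}^l + \mathbf{N}_l\pmb{\psi}_{l,k}$ and $c_{i,t} = \pmb{\psi}_{i,t}^H\pmb{\psi}_{l,k} = \sum_{b=1}^{\tau_p}\sqrt{\hat{p}_{i,t}^b \hat{p}_{l,k}^b}$ by \eqref{eq: Orthogonal_Property}. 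Writing $D_{l,k} = \sum_{(i,t)\in\mathcal{S}}\beta_{i,t}^l c_{i,t}^2 + \sigma^2 c_{l,k}$ for the denominator in Lemma~\ref{lemma: Distribution}, one has $a_{l,k} = \beta_{l,k}^l c_{l,k}/D_{l,k}$ together with the convenient identity $\gamma_{l,k}^l = a_{l,k}\beta_{l,k}^l c_{l,k} = a_{l,k}^2 D_{l,k}$, which I will reuse repeatedly to simplify.

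The two easy terms come from the independence of $\hat{\mathbf{h}}_{l,k}^l$ and $\mathbf{e}_{l,k}^l$ in Lemma~\ref{lemma: Distribution}. Writing $\mathbf{h}_{l,k}^l = \hat{\mathbf{h}}_{l,k}^l + \mathbf{e}_{l,k}^l$ gives the coherent gain $\mathbb{E}\{\mathbf{v}_{l,k}^H\mathbf{h}_{l,k}^l\} = \mathbb{E}\{\|\hat{\mathbf{h}}_{l,k}^l\|^2\} = M\gamma_{l,k}^l$, and the noise term $\mathbb{E}\{\|\mathbf{v}_{l,k}\|^2\} = M\gamma_{l,k}^l$ as well, so the numerator of \eqref{eq: SINR_k} is $p_{l,k} M^2(\gamma_{l,k}^l)^2$.

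The main work is the interference power $\mathbb{E}\{|\mathbf{v}_{l,k}^H\mathbf{h}_{i,t}^l|^2\} = a_{l,k}^2\,\mathbb{E}\{|\mathbf{y}_{l,k}^H\mathbf{h}_{i,t}^l|^2\}$ for every $(i,t)$. The subtlety, and the part I expect to be most error-prone, is that $\mathbf{y}_{l,k}$ itself contains $\mathbf{h}_{i,t}^l$ with weight $c_{i,t}$, so the estimate is correlated with each interferer that shares the pilot direction and a naive second-moment argument fails. I would handle this by splitting $\mathbf{y}_{l,k} = c_{i,t}\mathbf{h}_{i,t}^l + \mathbf{z}_{i,t}$, where $\mathbf{z}_{i,t}$ gathers the remaining channels and noise and is independent of $\mathbf{h}_{i,t}^l$. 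The cross terms then vanish by zero-mean independence, leaving a fourth-moment contribution $c_{i,t}^2\,\mathbb{E}\{\|\mathbf{h}_{i,t}^l\|^4\} = c_{i,t}^2 M(M+1)(\beta_{i,t}^l)^2$ and an incoherent contribution $\beta_{i,t}^l\,\mathbb{E}\{\|\mathbf{z}_{i,t}\|^2\} = \beta_{i,t}^l M(D_{l,k} - c_{i,t}^2\beta_{i,t}^l)$. Adding these and simplifying gives $\mathbb{E}\{|\mathbf{v}_{l,k}^H\mathbf{h}_{i,t}^l|^2\} = a_{l,k}^2\big(M^2 c_{i,t}^2(\beta_{i,t}^l)^2 + M\beta_{i,t}^l D_{l,k}\big)$, where the first term is the coherent pilot-contamination component scaling with $M^2$ and the second is the incoherent $\mathcal{O}(M)$ component; getting the $M(M+1)$ factor right and correctly peeling off the self-term $c_{i,t}^2\beta_{i,t}^l$ from $\mathbb{E}\{\|\mathbf{z}_{i,t}\|^2\}$ is the delicate step.

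Finally I would assemble \eqref{eq: SINR_k}. Summing the interference over $\mathcal{S}$ and using $a_{l,k}^2 M D_{l,k} = M\gamma_{l,k}^l$ produces the denominator $a_{l,k}^2 M^2\sum_{(i,t)}p_{i,t}c_{i,t}^2(\beta_{i,t}^l)^2 + M\gamma_{l,k}^l\sum_{(i,t)}p_{i,t}\beta_{i,t}^l + \sigma^2 M\gamma_{l,k}^l$. The $(i,t)=(l,k)$ term of the first sum equals $M^2 p_{l,k}(\gamma_{l,k}^l)^2$ because $a_{l,k}^2 c_{l,k}^2(\beta_{l,k}^l)^2 = (\gamma_{l,k}^l)^2$, so it cancels exactly against the subtracted coherent term $p_{l,k}M^2(\gamma_{l,k}^l)^2$ in \eqref{eq: SINR_k}, restricting the residual sum to $(i,t)\neq(l,k)$. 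Dividing numerator and denominator by $M\gamma_{l,k}^l$, using $a_{l,k}^2/\gamma_{l,k}^l = 1/D_{l,k}$, and then clearing $D_{l,k}$ reduces the expression to exactly \eqref{eq: SINR_MRC1} once $c_{i,t}$, $c_{l,k}$ and $D_{l,k}$ are written back in terms of the pilot powers $\hat{p}_{i,t}^b$.
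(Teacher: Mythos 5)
Your proposal is correct and follows exactly the route the paper indicates but omits for space: computing the moments of Gaussian distributions (the MMSE estimate as a scaled version of the correlated pilot observation, second moments for the coherent/noise terms, and the fourth-moment decomposition of the correlated interferers) and assembling them in the generic bound \eqref{eq: SINR_k}. All intermediate identities check out — in particular $\gamma_{l,k}^l = a_{l,k}^2 D_{l,k}$, the $M(M+1)$ fourth moment, and the exact cancellation of the $(i,t)=(l,k)$ coherent term — so your sketch is a valid filled-in version of the paper's omitted proof.
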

\begin{proof}
The SINR value in \eqref{eq: SINR_MRC1} is obtained by computing the moments of Gaussian distributions, similar to \cite{Chien2017a}. The detailed proof is omitted due to space limitations. 
\end{proof}

Inspecting \eqref{eq: SINR_MRC1}, we notice that it is always advantageous to add BS antennas since the numerator grows linearly with $M$. The first term in the denominator represents non-coherent interference from all users in the system,  and it is independent of $M$. The second term in the denominator represents coherent interference caused by pilot contamination and it grows linearly with $M$. We stress that a proper pilot design and power control $\hat{p}_{l,k}^b, \forall l,k,b,$ can improve the SE by enhancing the channel estimation quality and reducing the coherent interference caused by pilot contamination.

\section{Max-min Fairness Optimization} \label{Section: OptProblem}
In this section, we utilize the SE expression in Lemma~\ref{Lemma: Achievable_Rate} to formulate a max-min SE pilot optimization problem. We further demonstrate that the optimization problem is NP-hard, and therefore instead of seeking the global optimum, a local solution with polynomial complexity is derived.

\setcounter{eqnback}{\value{equation}} \setcounter{equation}{30}
 \begin{figure*}[t]
 	\begin{equation} \label{eq: SINR_MRCApproximation}
 	\widetilde{\textrm{SINR}}_{l,k}= \frac{ M (\beta_{l,k}^l)^2 p_{l,k} \prod\limits_{b=1}^{\tau_p} \left( \hat{p}_{l,k}^b / \alpha_{l,k}^b \right)^{2\alpha_{l,k}^b} }{  \left( \sum\limits_{(i,t) \in \mathcal{S} } \beta_{i,t}^l \left( \sum\limits_{b=1}^{\tau_p} \sqrt{ \hat{p}_{i,t}^{b}  \hat{p}_{l,k}^{b}} \right)^2  +  \sigma^2 \sum\limits_{b=1}^{\tau_p} \hat{p}_{l,k}^b \right) \left( \sum\limits_{(i,t) \in \mathcal{S} } p_{i,t}  \beta_{i,t}^l + \sigma^2 \right)   +  M \sum\limits_{(i,t) \in \mathcal{S} \setminus (l,k)} p_{i,t}  (\beta_{i,t}^l )^2  \left(\sum\limits_{b=1}^{\tau_p} \sqrt{ \hat{p}_{i,t}^{b}  \hat{p}_{l,k}^{b}} \right)^2  }
 	\end{equation}
 	\vspace*{-0.25cm}
 	\hrulefill
 	\vspace*{-0.25cm}
 \end{figure*}
  \setcounter{eqncnt}{\value{equation}}
  \setcounter{equation}{\value{eqnback}}
  
\subsection{Problem Formulation} \label{Subsect: Problem}
One of the key visions of Massive MIMO is to provide uniformly good service for everyone in the system, which is known as max-min fairness. In this paper, we investigate how to optimize the pilot sequences to achieve this goal. We consider the pilot powers (over the basis vectors) as optimization variables while the data powers are assumed to be predetermined. The max-min SE optimization problem is formulated for the proposed pilot design as
\setcounter{eqnback}{\value{equation}} \setcounter{equation}{23}
\begin{equation} \label{eq: Opt_Prob1}
\begin{aligned}
&\underset{\{ \hat{p}_{l,k}^b \geq 0 \}}{ \mathrm{maximize} } &&  \underset{(l,k)}{\min} \;  \log_2 \left( 1 + \textrm{SINR}_{l,k} \right) \\
& \text{subject to} && \frac{1}{\tau_p}\sum_{b=1}^{\tau_p} \hat{p}_{l,k}^b \leq P_{\max, l,k}, \forall l,k.
\end{aligned}
\end{equation}
Note that this optimization problem jointly generates the pilot sequences and performs pilot power control. An equivalent epigraph-form representation of \eqref{eq: Opt_Prob1} is
\begin{subequations} \label{eq: Opt_Prob2}
\begin{align} 
& \underset{ \xi, \{ \hat{p}_{l,k}^b \geq 0 \}}{ \mathrm{maximize} }  && \xi \\
& \text{subject to} &&  \mathrm{SINR}_{l,k}  \geq \xi, \forall l,k, \label{P1:a} \\
&&& \frac{1}{\tau_p}\sum_{b=1}^{\tau_p} \hat{p}_{l,k}^b \leq P_{\max, l,k}, \forall l,k. \label{P1:b} 
\end{align}
\end{subequations}
From the expression of the SINR constraints in \eqref{eq: Opt_Prob2}, we realize that the proposed max-min SE optimization problem is a signomial program.\footnote{A function $f(x_1, \ldots, x_{N_1}) = \sum_{n=1}^{N_2} c_n \prod_{m= 1}^{N_1} x_m^{a_{n,m}}$ defined in $\mathbb{R}_{+}^{N_1}$ is signomial with $N_2$ terms $(N_2 \geq 2)$ if the exponents $a_{n,m}$ are real numbers and the coefficients $c_n$ are also real but at least one of them must be negative. In case of all $c_n, \forall n,$ are positive, $f(x_1, \ldots, x_{N_1})$ is a posynomial function.} Therefore, the max-min SE optimization problem is NP-hard in general and seeking the optimal solution has very high complexity in any non-trivial setup \cite{Lange2014a}. However, the power constraints \eqref{P1:b} ensure a compact feasible domain and make the SINRs continuous functions so that the optimal solution to \eqref{eq: Opt_Prob2} always exists.

\subsection{Local Optimality Algorithm}
This subsection provides an algorithm to approximate the optimization problem \eqref{eq: Opt_Prob2} as a geometric program. In detail, the signomial SINR constraints are converted to corresponding monomial constraints by using the weighted arithmetic mean-geometric mean inequality \cite{Chiang2007b} as in Lemma~\ref{Lemma: Local_Approximation}.\footnote{ A function $f(x_1, \ldots, x_{N_1}) = c\prod_{m=1}^{N_1} x_m^{a_m}$ defined in $\mathbb{R}_{+}^{N_1}$ is monomial if the coefficient $c >0$ and the exponients $a_m, \forall m,$ are real numbers. }
\begin{lemma} \cite[Lemma~1]{Chiang2007b}  \label{Lemma: Local_Approximation}
Assume that a posynomial function $g(x)$ is defined from the set of $\tau_p$ monomials  $\{ u_1 (x), \ldots, u_{\tau_p} (x) \}$
\begin{equation}
g(x) = \sum_{b=1}^{\tau_p} u_b (x),
\end{equation}
then this posynomial function is lower bounded by a monomial function $\tilde{g}(x)$ as
\begin{equation}
g(x) \geq \tilde{g}(x) = \prod_{b=1}^{\tau_p} \left(  u_{b}(x)/ \alpha_b \right)^{\alpha_b},
\end{equation}
where $\alpha_b$ is a non-negative weight value corresponding to $u_{b} (x)$. We say that $\tilde{g}(x_0)$ is the best approximation to $g(x_0)$ near the given point $x_0$ in the sense of the first order Taylor expansion, if the weight $\alpha_b $  is defined as
\begin{equation} \label{eq: WeightDef}
\alpha_b = u_b(x_0) \Big/ \sum_{b=1}^{\tau_p} u_b (x_0) .
\end{equation}
\end{lemma}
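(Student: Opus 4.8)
The plan is to separate the two assertions of the lemma: first the inequality $g(x) \geq \tilde{g}(x)$, and second the claim that the weights in \eqref{eq: WeightDef} make $\tilde{g}$ the best monomial approximation of $g$ near $x_0$. The inequality is the weighted arithmetic mean--geometric mean inequality in disguise, whereas the optimality statement is most transparent after the standard geometric-programming change of variables $z_m = \log x_m$, under which each $\log u_b$ is affine, $\log g$ becomes a convex log-sum-exp function, and $\log \tilde{g}$ turns out to be precisely its tangent plane at $z_0 = \log x_0$.

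For the inequality I would first note that the weights in \eqref{eq: WeightDef} satisfy $\sum_{b=1}^{\tau_p} \alpha_b = 1$, since $\sum_{b} u_b(x_0) = g(x_0)$; this normalization is exactly what the weighted AM--GM inequality requires. Setting $y_b = u_b(x)/\alpha_b \geq 0$ and invoking $\sum_b \alpha_b y_b \geq \prod_b y_b^{\alpha_b}$ gives directly
\[
g(x) = \sum_{b=1}^{\tau_p} u_b(x) = \sum_{b=1}^{\tau_p} \alpha_b \frac{u_b(x)}{\alpha_b} \geq \prod_{b=1}^{\tau_p}\left(\frac{u_b(x)}{\alpha_b}\right)^{\alpha_b} = \tilde{g}(x),
\]
which is the asserted lower bound and holds for any nonnegative weights summing to one.

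For the optimality part I would pass to $z_m = \log x_m$ and write each monomial as $\log u_b = \log c_b + \sum_m a_{b,m} z_m$, so that $\log g(z) = \log \sum_b e^{\log u_b(z)}$ is convex in $z$. Computing the gradient yields $\partial \log g / \partial z_m = \sum_b (u_b/g)\, a_{b,m}$, and evaluating at $z_0$ makes the coefficients $u_b(x_0)/g(x_0)$ appear, which are precisely the weights \eqref{eq: WeightDef}. I would then verify that $\log \tilde{g} = \sum_b \alpha_b(\log u_b - \log \alpha_b)$ is affine in $z$ with exactly this gradient, and that $\log \tilde{g}(z_0) = \log g(z_0)$; the latter holds because $u_b(x_0) = \alpha_b\, g(x_0)$ forces $u_b(x_0)/\alpha_b = g(x_0)$ for every $b$. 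Hence $\log \tilde{g}$ is the first-order Taylor expansion of $\log g$ at $z_0$, and by convexity this tangent plane lies below $\log g$, simultaneously re-deriving the inequality and confirming that $\tilde{g}$ is the tightest monomial approximant locally.

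The main obstacle is only a bookkeeping one: checking that the \emph{single} choice \eqref{eq: WeightDef} matches both the value and the gradient of $\log g$ at $x_0$, which is what promotes $\tilde{g}$ from merely \emph{a} lower bound to \emph{the} best first-order approximation. The identity that makes everything collapse at the expansion point is $u_b(x_0) = \alpha_b\, g(x_0)$, since it renders $u_b(x_0)/\alpha_b$ independent of $b$ and equal to $g(x_0)$, so the weighted geometric mean reduces to $g(x_0)$ exactly at $x_0$.
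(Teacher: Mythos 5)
Your proof is correct and complete. Note that the paper does not actually prove this lemma: it is quoted directly from \cite[Lemma~1]{Chiang2007b}, so the ``paper's own proof'' is a citation to an external result. Your argument is essentially the standard proof of that cited result, and it fills in both halves properly: the weighted arithmetic mean--geometric mean step gives $g(x) \geq \tilde{g}(x)$ for any nonnegative weights summing to one, and the change of variables $z_m = \log x_m$ exhibits $\log \tilde{g}$ as the tangent plane of the convex log-sum-exp function $\log g$ at $z_0$, which simultaneously re-derives the bound and gives precise meaning to ``best approximation in the sense of the first-order Taylor expansion.'' Two details in your write-up deserve emphasis because the lemma statement glosses over them. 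First, the normalization $\sum_{b} \alpha_b = 1$ is indispensable for the AM--GM step, yet the lemma only says ``non-negative weight value''; you correctly observed that the specific choice \eqref{eq: WeightDef} supplies this normalization automatically, and indeed without it the inequality can fail (a single monomial with $\alpha_1 < 1$ already violates it at a point where $u_1(x) = \alpha_1$). Second, your identity $u_b(x_0) = \alpha_b\, g(x_0)$, which makes $u_b(x_0)/\alpha_b$ independent of $b$, is exactly what forces value matching $\tilde{g}(x_0) = g(x_0)$ on top of gradient matching, and both are needed for the tangency claim. One could object that matching value and gradient in the $z$-coordinates, rather than in the original $x$-coordinates, is what the lemma implicitly means by first-order optimality; this is the standard convention in geometric programming (monomials are affine in log-scale), so your reading is the intended one.
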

Using this lemma, the max-min SE optimization problem \eqref{eq: Opt_Prob2} is converted to a geometric program by bounding the term $\sum_{b=1}^{\tau_p}  \hat{p}_{l,k}^b$ in the numerators of the SINR constraints:
\begin{equation} \label{eq_: Power_Approximation}
 \sum_{b=1}^{\tau_p}  \hat{p}_{l,k}^b \geq \prod_{b=1}^{\tau_p} \left( \hat{p}_{l,k}^b / \alpha_{l,k}^b \right)^{\alpha_{l,k}^b},
\end{equation}
where $\alpha_{l,k}^b$ is the weight value corresponding to $\hat{p}_{l,k}^b$. It leads to a lower bound on the SINR value for user $k$ in cell $l$ as
\begin{equation} \label{eq: SINRBound}
 \textrm{SINR}_{l,k} \geq   \widetilde{\textrm{SINR}}_{l,k},
\end{equation}
where $\widetilde{\textrm{SINR}}_{l,k}$ is presented in \eqref{eq: SINR_MRCApproximation} at the top of this page.

The solution to the max-min SE optimization problem \eqref{eq: Opt_Prob2} is lower bounded by the following geometric program
\setcounter{eqnback}{\value{equation}} \setcounter{equation}{31}
\begin{equation} \label{eq: Opt_Prob3}
\begin{aligned}
& \underset{ \xi, \{ \hat{p}_{l,k}^b \geq 0 \}}{ \mathrm{maximize} }  && \xi \\
& \text{subject to} &&  \widetilde{\mathrm{SINR}}_{l,k}  \geq \xi, \forall l,k, \\
&&& \frac{1}{\tau_p}\sum_{b=1}^{\tau_p} \hat{p}_{l,k}^b \leq P_{\max, l,k}, \forall l,k.
\end{aligned}
\end{equation}
By virtue of the successive approximation technique \cite{Marques1978a}, a local solution to the original optimization problem \eqref{eq: Opt_Prob2} is obtained  if we solve \eqref{eq: Opt_Prob3} iteratively as follows.

\begin{theorem} \label{Theorem: KKTpoint}
Selecting a feasible starting point $\hat{p}_{l,k}^{b, (0)}, \forall l,k,b,$ and solving \eqref{eq: Opt_Prob3} in an iterative manner via consecutively updating the weight values $\alpha_{l,k}^b$, the solution will converge to the Karush-Kuhn-Tucker (KKT) local point to \eqref{eq: Opt_Prob2}.
\end{theorem}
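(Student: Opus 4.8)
\section{Proof Proposal}

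The plan is to recognize the iterative scheme as an instance of the successive (inner) convex approximation framework of Marks and Wright \cite{Marques1978a}, and to verify that the monomial surrogate produced by Lemma~\ref{Lemma: Local_Approximation} satisfies the three standard properties that force such a scheme to converge to a KKT point. Writing $g_{l,k} = \sum_{b=1}^{\tau_p} \hat{p}_{l,k}^b$ for the posynomial factor in the numerator of $\mathrm{SINR}_{l,k}$ and $\tilde{g}_{l,k} = \prod_{b=1}^{\tau_p} ( \hat{p}_{l,k}^b / \alpha_{l,k}^b )^{\alpha_{l,k}^b}$ for its monomial condensation, I would first establish, at the current iterate $\{\hat{p}_{l,k}^{b,(n)}\}$: (i) the global lower bound $\tilde{g}_{l,k} \le g_{l,k}$ on the whole feasible domain, which is exactly the weighted AM--GM inequality of Lemma~\ref{Lemma: Local_Approximation} and which makes \eqref{eq: Opt_Prob3} an inner approximation (a restriction) of \eqref{eq: Opt_Prob2}; (ii) tightness $\tilde{g}_{l,k} = g_{l,k}$ at the current iterate, obtained by substituting the weight definition \eqref{eq: WeightDef} and using $\sum_b \alpha_{l,k}^b = 1$; and (iii) gradient matching $\nabla \tilde{g}_{l,k} = \nabla g_{l,k}$ at that iterate, which is precisely the first-order Taylor optimality built into \eqref{eq: WeightDef}. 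Since $\widetilde{\mathrm{SINR}}_{l,k}$ differs from $\mathrm{SINR}_{l,k}$ only through the replacement of $g_{l,k}$ by $\tilde{g}_{l,k}$, these three properties transfer directly to the SINR surrogate, yielding \eqref{eq: SINRBound} together with equality and gradient coincidence at the linearization point.

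Next I would establish monotone convergence of the objective. By tightness (ii) together with the lower bound (i), the iterate $\{\hat{p}_{l,k}^{b,(n)}\}$ remains feasible for the surrogate problem \eqref{eq: Opt_Prob3} solved at iteration $n+1$ while still attaining the previous value $\xi^{(n)}$; hence the new optimum satisfies $\xi^{(n+1)} \ge \xi^{(n)}$, so $\{\xi^{(n)}\}$ is non-decreasing. By the lower-bound property (i), every solution of \eqref{eq: Opt_Prob3} is feasible for \eqref{eq: Opt_Prob2}, and since the power budget constraints \eqref{P1:b} confine the variables to a compact set on which the continuous objective is bounded, $\{\xi^{(n)}\}$ is bounded above and therefore converges. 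Compactness also guarantees that the iterate sequence admits a convergent subsequence with limit $\hat{p}^\star$.

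Finally I would show that $\hat{p}^\star$ is a KKT point of \eqref{eq: Opt_Prob2}. After the logarithmic change of variables $\hat{p}_{l,k}^b = e^{s_{l,k}^b}$ each surrogate \eqref{eq: Opt_Prob3} is a convex geometric program, so its global solution is characterized exactly by its KKT system, which I would write out with multipliers for the SINR and power constraints. Along the limit the weights stabilize, $\alpha_{l,k}^b \to \hat{p}_{l,k}^{b,\star} / \sum_{b'} \hat{p}_{l,k}^{b',\star}$, so $\hat{p}^\star$ is a fixed point of the update; invoking tightness (ii) and gradient matching (iii) at $\hat{p}^\star$, the surrogate constraint functions and their gradients coincide with those of the true SINR constraints there, and the KKT system of the surrogate collapses onto the KKT system of \eqref{eq: Opt_Prob2}. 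The main obstacle is precisely this last passage to the limit: one must argue that the limit point is genuinely a fixed point of the weight update (continuity of the solution map), that the constraint multipliers converge along the subsequence, and that a constraint qualification --- for instance Slater's condition for the convex surrogates, which holds whenever a strictly feasible pilot allocation exists --- is preserved in the limit, so that the limiting multipliers are legitimate KKT multipliers for the original signomial program rather than merely for the surrogate.
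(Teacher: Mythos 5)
Your proposal takes essentially the same route as the paper: the paper's proof is a brief sketch that adapts the successive inner-approximation framework of \cite{Marques1978a}, first arguing convergence of the iterates to a limit point and then showing that this limit is a KKT point of \eqref{eq: Opt_Prob2}. Your write-up correctly fills in exactly those steps---verifying the lower-bound (restriction), tightness, and gradient-matching properties of the AM--GM condensation, deducing monotone convergence on the compact feasible set, and collapsing the surrogate's KKT system onto that of the original problem---so it is a correct, more detailed instance of the paper's own argument.
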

\begin{proof}
The proof is adapted from the general framework in \cite{Marques1978a}. We first prove that the procedures in Theorem \ref{Theorem: KKTpoint} guarantee that the solution converges to a limit point. This point is further proved to be a KKT local point to \eqref{eq: Opt_Prob2}. The detail proof is omitted due to space limitations.
\end{proof}

After selecting the initial powers $\hat{p}_{l,k}^{b,(0)}, \forall l,k,b$, we compute the weight values by applying \eqref{eq: WeightDef}. Furthermore, in each iteration, the SINR constraints are converted to the monomials by bounding the pilot power of user~$k$ in cell~$l$ as in \eqref{eq: SINR_MRCApproximation} with noting that the weight values are computed based on the optimal powers of the previous iteration using \eqref{eq: WeightDef}. The solution is then obtained by solving the geometric program \eqref{eq: Opt_Prob3}. At the end of each iteration, the weight values are updated for the next iteration. We repeat the procedure until the algorithm converged to a KKT local point. The convergence can be declared, for example, when the variation between two consecutive iterations is sufficient small. The proposed local optimality approach is summarized  in Algorithm \ref{Algorithm: Local_Approximation}. 
\begin{algorithm}
\caption{Successive approximation algorithm for \eqref{eq: Opt_Prob2}} \label{Algorithm: Local_Approximation}
\textbf{Input}: Set $i=1$; Select the data powers $p_{l,k}$ for $ \forall l= 1, \ldots, L; k =1, \ldots, K$; Select the initial values of powers $\hat{p}_{l,k}^{b,(0)}$ for $\forall l=1, \ldots, L; k= 1, \ldots, K,$ and $b=1,\ldots,\tau_p$; Compute the weight values: $\alpha_{l,k}^{b,(1)} = \hat{p}_{l,k}^{b,(0)} /\sum_{b=1}^{\tau_p} \hat{p}_{l,k}^{b,(0)}, \forall l,k,b.$
\begin{itemize}
  \item[1.] \emph{Iteration} $i$: 
  \begin{itemize}
  
    \item[1.1.] Solve the geometric program \eqref{eq: Opt_Prob3} with $\alpha_{l,k}^b = \alpha_{l,k}^{b,(i)}$ to get the optimal values $\xi^{(i),\ast}$ and $\hat{p}_{l,k}^{b,(i),\ast}, \forall l,k,b$.

    \item[1.2.] Update the weight values: $ \alpha_{l,k}^{b,(i+1)} = \hat{p}_{l,k}^{b,(i),\ast} /\sum_{b=1}^{\tau_p} \hat{p}_{l,k}^{b,(i),\ast}, \forall l,k,b.$
 \end{itemize}
 \item[2.] If Stopping criterion satisfied  $\rightarrow$ Stop. Otherwise, go to Step 3.
   \item[3.] Set $\xi^{\ast} = \xi^{(i),\ast}$ and $\hat{p}_{l,k}^{b,\ast} = \hat{p}_{l,k}^{b,(i),\ast}, \forall l,k,b$; Set $i = i+1$, go to Step 1.
\end{itemize}
\textbf{Output}: The solutions $\xi^{\ast}$ and $\hat{p}_{l,k}^{b,\ast}, \forall l,k,b.$
\end{algorithm}

\section{Experimental Results} \label{Section: Experimental Result}
\vspace{-0.1cm}
A Massive MIMO system with coverage area $1 \mbox{ km}^2$ comprising of $4$ square cells is considered for simulation. In each cell, a BS is located at the center, while  $K$ users are uniformly distributed at distance greater than $35$ m from the BS. To even out the interference, the coverage area is wrapped around, and therefore one BS has eight neighbors. 
We assume that the coherence block contains $\tau_c =200$ symbols. The system operates over a $20$ MHz bandwidth and the corresponding noise variance is $-96$ dBm, including a noise figure of $5$~dB. The large-scale fading coefficient $\beta_{i,t}^l $ is computed as $\beta_{i,t}^l = -148.1 - 37.6 \log_{10} d_{i,t}^l + z_{i,t}^l$ [dB], where $d_{i,t}^l$ denotes the distance [km] between user $t$ in cell $i$ and BS $l$. The shadow fading $z_{i,t}^l$ is created by a Gaussian distributed with zero mean and standard derivation $7$ dB.\footnote{ Shadow fading realizations were sometimes regenerated to ensure that the home BS has the largest large-scale fading to its users (i.e., $\beta_{l,k}^l$ is the maximum over all $\beta_{i,k}^l, i =1,\ldots, L$.) } The payload data symbols have equal power, $p_{l,k}=200$ mW, $\forall l,k$ and the maximum pilot power constraints $P_{\max,l,k} = 200$ mW, $\forall l,k$. 

For Algorithm \ref{Algorithm: Local_Approximation}, we observed better performance with a hierarchical initialization of $\hat{p}_{l,k}^{b,(0)}$ than with an all-equal initialization. Consequently, we initialize $\hat{p}_{l,k}^{b,(0)}$ as uniformly distributed over the range $[0, P_{\max,l,k}]$. Algorithm \ref{Algorithm: Local_Approximation} converges quite fast, so the stopping criteria can be easily specified in the number of iterations (e.g., $15$ iterations). The proposed algorithm is compared with related works and brute-force:
\begin{itemize}
\item[$(i)$] \emph{Universal random pilot assignment}, as considered in \cite{Jose2011b,Bjornson2016a}. The same pilots are reused in every cell and assigned randomly to the users within the cell. Equal pilot power  $\hat{p} = 200$ mW is used by all users. 
 \item[$(ii)$] \emph{Smart pilot assignment}, as proposed in \cite{Xu2015a}. Orthogonal pilot sequences are assumed in every cell and are assigned to the users based on the mutual interference, determined by the large-scale fading coefficients. Equal pilot power  $\hat{p} = 200$ mW is used by all users. 
 \item[$(iii)$] \emph{Pilot power control with brute-force search} utilizes the pilot structure in \eqref{eq: PilotStructure3}. A brute-force search over all permutation matrices  $\pmb{\Pi}_l$ is performed, and for each matrix the optimum pilot powers are computed.
 \end{itemize}
 \begin{figure}[t]
 	\centering
	\includegraphics[ trim=0.5cm 0cm 1.2cm 0.55cm, clip=true, width=3.2in]{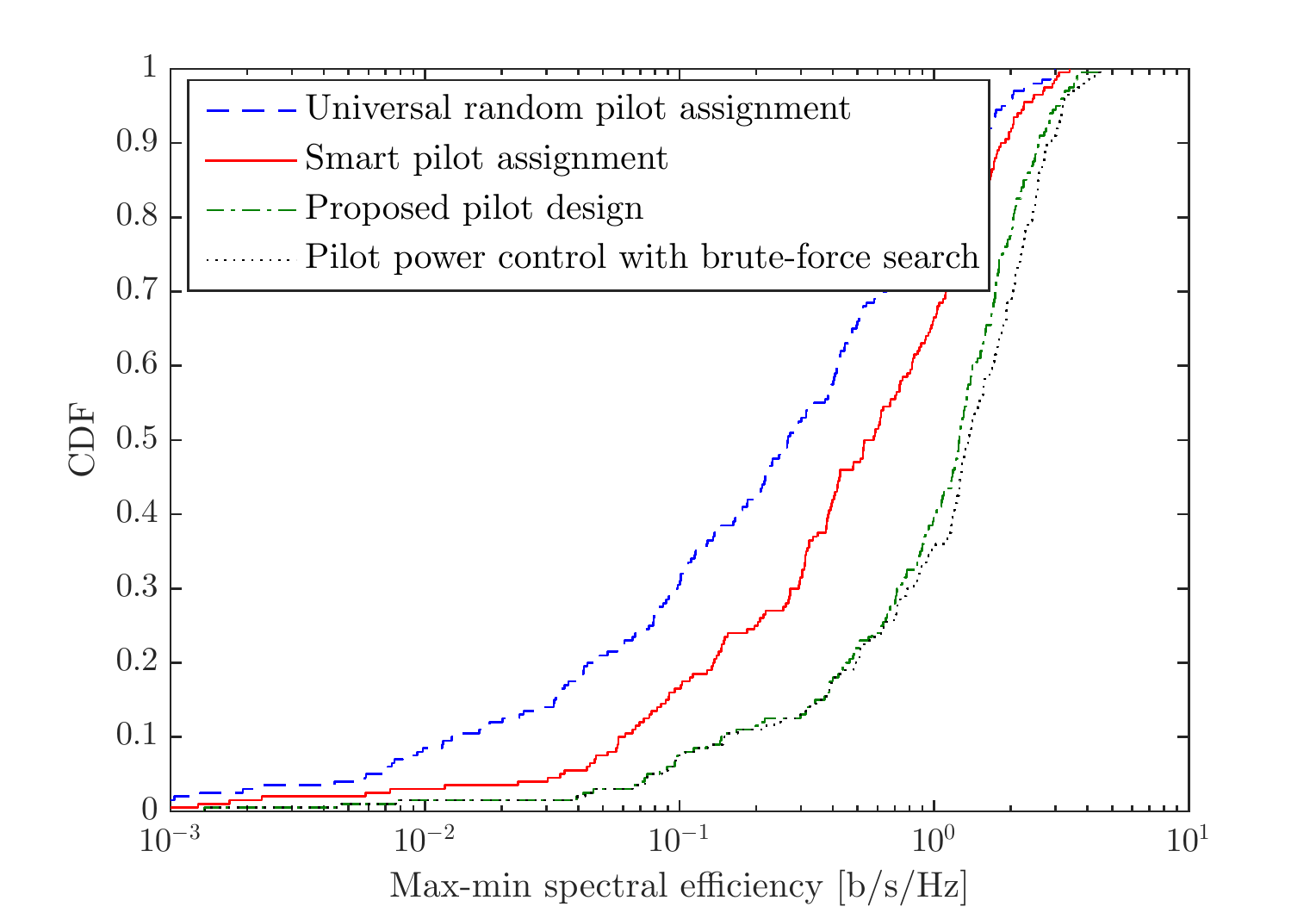} \vspace*{-0.4cm}
 	\caption{ Cumulative distribution function (CDF) of the max-min SE [b/s/Hz] with $K = \tau_p = 2$ and $M = 300$.}
 	\label{Fig-CDF-2K2B}
 	\vspace*{-0.45cm}
 \end{figure}
SE is measured over different random user locations and shadow fading realizations. The SE achieved by $(i)-(iii)$ and Algorithm \ref{Algorithm: Local_Approximation} are also averaged over different pilot reuse locations and initializations of $\hat{p}_{l,k}^{b,(0)}, \forall l,k,b$, respectively. Additionally, the solutions to the optimization problems are obtained by utilizing the MOSEK solver \cite{Mosek} with CVX \cite{cvx2015}.

Fig.~\ref{Fig-CDF-2K2B} shows the cumulative distribution function (CDF) of the max-min SE [b/s/Hz] for the case $K=\tau_p=2$ and $M =300$. Universal random pilot assignment yields the worst SE, because of the pilot contamination and mutual interference. At the $95 \%$-likely SE point, smart pilot assignment brings significant enhancement: it is about $4.75\times$ better than universal random pilot assignment thanks to exploitation of the mutual interference between the users \cite{Xu2015a}. Although the performance of smart pilot assignment is very close to optimal pilot assignment with brute-force search for a fixed power level \cite{Xu2015a}, by jointly optimizing the power and pilot assignment, the proposed method outperforms smart pilot assignment by providing a $1.6\times$ gain in average max-min SE. Furthermore, the similar performance of the proposed pilot design and pilot power control with brute-force search confirms effectiveness of the proposed local optimality algorithm. 

\begin{figure}[t]
	\centering
	\includegraphics[trim=0.5cm 0cm 1.2cm 0.55cm, clip=true, width=3.2in]{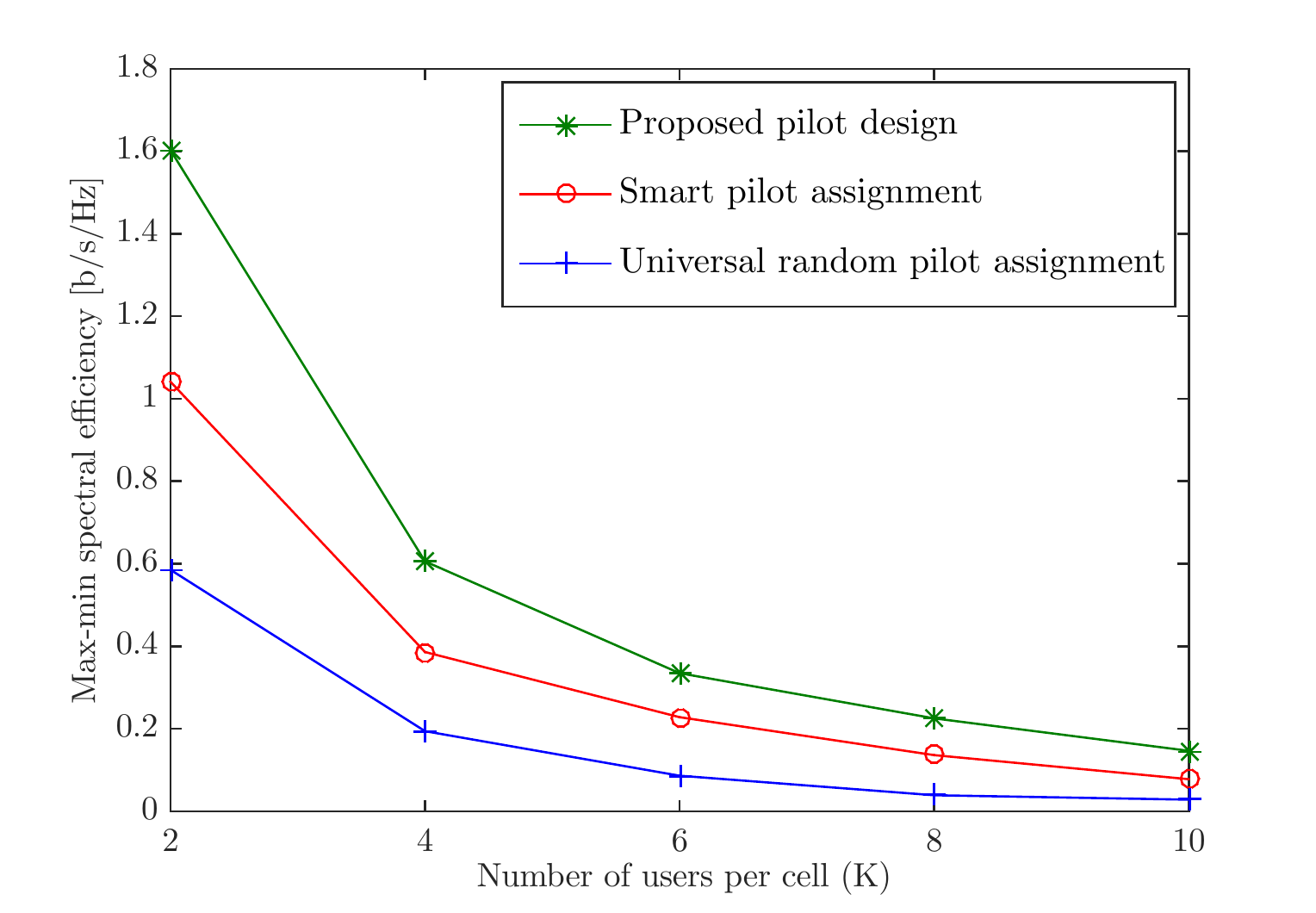} \vspace*{-0.4cm}
	\caption{ Max-min SE [b/s/Hz] vs. the number of user per cell with $K = \tau_p$ and $M = 300$.}
	\label{Fig-VariousK}
	\vspace*{-0.4cm}
\end{figure}

\begin{figure}[t]
	\centering
	\includegraphics[trim=0.5cm 0cm 1.2cm 0.55cm, clip=true, width=3.2in]{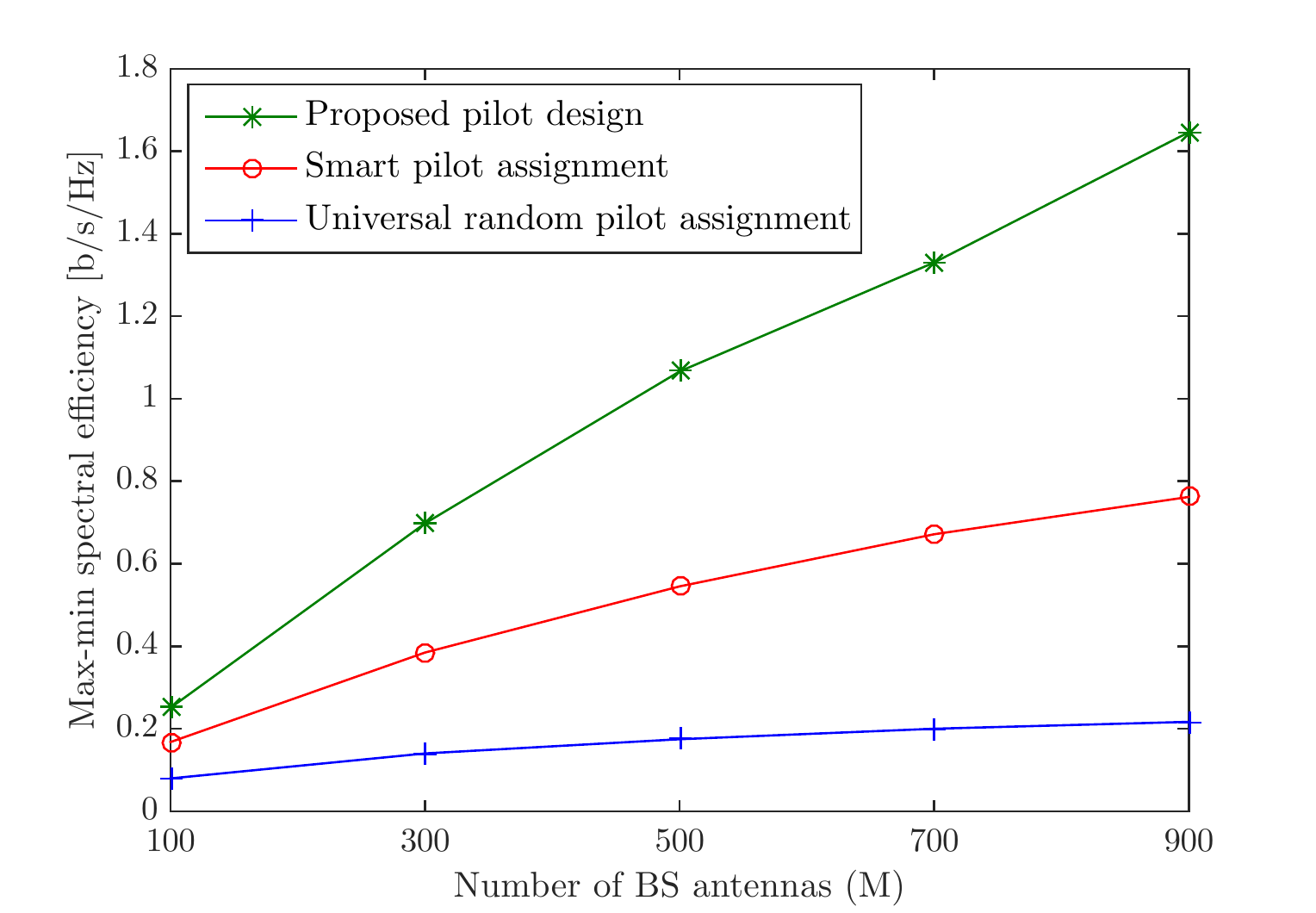} \vspace*{-0.4cm}
	\caption{ Max-min SE [b/s/Hz] vs. the number of BS antennas, $ K = \tau_p = 4$.}
	\label{Fig-VariousM}
	 \vspace*{-0.45cm}
\end{figure}

Due to huge computational complexity, the brute-force search is not considered hereafter when we increase the number of users. Fig.~\ref{Fig-VariousK} plots the average max-min SE as a function of the number of users per cell, assuming $\tau_p=K$. The proposed pilot design provides the highest SE over all tested scenarios. Specifically, in comparison to universal random pilot assignment, the improvement varies from $2.73\times$  to $5.22\times$  with $K = 2 $ to $K=10$, respectively. Even though smart pilot assignment performs better than universal random pilot assignment, the proposed method still provides SE improvements of up to $1.88\times $ at $K = 10$. Moreover, we observe a dramatic reduction of the max-min SE when the number of users increases due to stronger mutual interference.

Fig.~\ref{Fig-VariousM} shows the average max-min SE versus the number of BS antennas. Among the three pilot assignment techniques, we again observe the worst SE with universal random pilot reuse. The max-min SE increases from $0.08$ [b/s/Hz] to $0.22$ [b/s/Hz] from $M=100$ to $M=900$. Our proposed pilot design always yields the highest SE and the gap to the smart pilot assignment reaches up to $2.16 \times$ at $M = 900$. 

\section{Conclusion} \label{Section: Conclusion}
\vspace*{-0.1cm}
This paper proposed a new methodology for joint optimization of the pilot assignment and pilot power control in Massive MIMO systems. The key difference from prior work is to treat the pilot sequences as continuous optimization variables, instead of predefined vectors that should be assigned combinatorially. A new SE expression was computed for the proposed pilot structure and it was used to formulate a max-min SE optimization problem. Finding the globally optimal solution is NP-hard, but we obtained an efficient local optimum that outperforms the previous state-of-the-art methods for pilot assignment. Large gains in max-min SE can be achieved by the proposed pilot assignment.

\bibliographystyle{IEEEtran}
\bibliography{IEEEabrv,refs}
\end{document}